\theoremstyle{plain}
 \newtheorem{thm}{Theorem}[section]
 \newtheorem{prop}{Proposition}[section]
\theoremstyle{definition}
 \newtheorem{rem}{Remark}[section]
\numberwithin{equation}{section}
\def\ji {\char'032}
\def\m  {\char'176}
\font\srit=wncyi8
 \font\srrm=wncyr8
\newcommand{\R}{\mathbb{R}}
\title[D'Alamebrt principle and classical relativity in Lagrangian mechanics]
{D'Alamebrt and Hamiltonian principles and classical relativity in Lagrangian mechanics}
\subjclass[2020]{37J06, 70G45, 70H25, 70H40}
\author[Jovanovi\'c]{
Bo\v zidar Jovanovi\'c}
\address{
Mathematical Institute SANU \\
Serbian Academy of Sciences and Arts \\
Kneza Mihaila 36, 11000 Belgrade\\
Serbia}
\email{bozaj@mi.sanu.ac.rs}
\begin{document}

\begin{abstract}
In this note we present invariant formulation of the d'Alambert principle and classical time-dependent Lagrangian mechanics with holonomic constraints from the perspective of moving frames.
\end{abstract}

\maketitle


\section{Introduction}

We consider a classical mechanical Lagrangian system $(Q,L)$, where $Q$ is an $n$--dimen\-sional configuration space and $L\colon TQ\times\R\to\R$ is a time-dependent Lagrangian.

The d'Alambert principle or d'Alambert-Lagrange principle states that the trajectories of a mechanical system can be
obtained from the condition that the variational derivative of the Lagrangian vanishes along virtual displacements \cite{Ap, Ar, AKN}.
It is one of the basic tools in mechanics and is applied to both holonomic and nonholonomic systems. For holonomic systems, the principle is equivalent to the Hamiltonian principle of least action.
Motivated by the notion of fixed and moving reference frames in rigid body dynamics \cite{Ar, AKN}, we consider arbitrary time-dependent transformations between the configuration space $Q$ (the fixed reference frame) and the manifold $M$ diffeomorphic to $Q$ (the moving reference frame)
and consider trajectories of a classical Lagrangian system in both reference frames (section \ref{s2}). In particular, we consider systems with time-dependent holonomic constraints (section \ref{s3}). All considerations are valid without the assumption that the Lagrangian is regular and are derived without the use of Lagrange multipliers. For this reason, we do not discuss the uniqueness of the solution.

In section \ref{s4} we apply the construction of moving reference frames for the invariant formulation of classical Lagrangian mechanics in a space-time, $(n+1)$--dimensional manifold $\mathcal Q$ fibred over $\R$ with fibers diffeomorphic to $Q$.
The invariant formulation of time-dependent classical Lagrangian mechanics is well studied (see e.g. \cite{MP, GM} and references therein). Here we have tried to present it with minimal technical requirements.

All considered objects in the note are assumed to be smooth.

\section{D'Alambert principle and dynamics of relative motions}\label{s2}

\subsection{D'Alambert principle and Hamiltonian principle of least action}

Let us consider a Lagrangian system $(Q,L)$, where $Q$ is an $n$--dimensional configuration space and $L(q,\dot q,t)$ is a Lagrangian, $L: TQ \times \R\to
\R$. Let $q=(q^1,\dots,q^n)$ be local coordinates on $Q$. A curve $\gamma\colon q(t)=(q^1(t),\dots,q^n(t))$ is a
motion of the system if it satisfies the \emph{Euler--Lagrange equations}
\begin{equation} \label{EL}
\frac{d}{dt}\frac{\partial L}{\partial \dot q^i}=\frac{\partial
L}{\partial q^i}, \qquad i=1,\dots,n.
\end{equation}

The solutions of the Euler--Lagrange equations are exactly the critical points of the action integral
\begin{equation}\label{action}
S_L({\gamma})=\int_a^b L(\gamma,\dot\gamma,t)dt
\end{equation}
in a class of curves
\begin{equation}\label{krive}
\varOmega=\varOmega(Q,q_0,q_1,a,b)=\{
{\gamma}\colon [a,b]\to Q, \, {\gamma}(a)=q_0, {\gamma}(b)=q_1\}
\end{equation}
with fixed endpoints (the \emph{Hamiltonian principle of least action}
(1834), see e.g. \cite{Ar, AKN, BKKT, Po}).
Namely, let $\gamma\in \varOmega$, let $\gamma_s\in\varOmega$, $s\in(-\epsilon,\epsilon)$ be a smooth variation of $\gamma$ ($\gamma_0=\gamma$), and let
\[
\eta=\frac{d}{ds}\vert_{s=0}(\gamma_s)
\]
be the corresponding vector field along the curve $\gamma\subset Q$. Then
\begin{equation}\label{izvod}
\lim_{s\to 0}\frac{1}{s}\big(S[\gamma_s]-S[\gamma]\big)=\int_a^b \delta L(\eta)\vert_\gamma dt,
\end{equation}
where $\delta L(\eta)\vert_\gamma$ is a variational derivative of $L$ along $\gamma$ in a direction of $\eta$.
Let $\xi$ and $\gamma$ be arbitrary time-dependent vector fields and a curve on $Q$.
The \emph{variational derivative of $L$ along $\gamma$ in the direction of $\xi$} is defined by
\begin{equation}\label{varijacioniIzvod}
\delta L(\xi)\vert_\gamma=\sum_{i=1}^n \big( \frac{\partial L}{\partial q^i}-\frac{d}{dt}\frac{\partial L}{\partial \dot q^i}\big)\xi^i\vert_\gamma,
\end{equation}
where $\xi=\sum_i\xi^i(q,t){\partial/\partial q^i}$ and $\gamma$ is given by $q(t)=(q^1(t),\dots,q^n(t))$ in a local coordinate system $q=(q^1,\dots,q^n)$.
Therefore, $\gamma(t)$ is an extremal of the action functional if and only if it satisfies the Euler--Lagrange equations \eqref{EL}.

In classical mechanics, for a given curve $\gamma$, $q(t)=(q^1(t),\dots,q^n(t))$, a time-dependent vector field $\eta$ along $\gamma$ is usually referred to as a vector field of \emph{virtual displacements}. By using the variational derivative, we can formulate the dynamics in terms of the \emph{d'Alambert principle}:
a curve $\gamma(t)$ is a motion of the Lagrangian system $(Q,L)$ if the variational derivative $\delta L(\eta)$ is equal to zero,
\begin{equation*}\label{Dalamber1}
\delta L(\eta)\vert_\gamma=\sum_{i=1}^n \big( \frac{\partial L}{\partial q^i}-\frac{d}{dt}\frac{\partial L}{\partial \dot q^i}\big)\eta^i\vert_\gamma=0,
\end{equation*}
for all virtual displacements $\eta$ along $\gamma$ \cite{Ap, Ar, AKN, BKKT}.

Although the statement of the principle is equivalent to the fact that an element of a dual space of a vector space is zero if and only if its kernel is the entire vector space, it is fundamental for the formulation of Lagrangian mechanical systems with constraints.

\subsection{Dynamics in the moving frames}

In analogy to rigid body dynamics, where we consider the fixed and the moving frames defined by time-dependent isometries of Euclidean space, we consider \emph{the moving reference frame} in a general Lagrangian system $(Q,L)$ as a time-dependent diffeomorphism
\footnote{On the other hand, note that in \cite{FA}  the rigid body dynamics is considered from the perspective of continuum mechanics.}
\[
g_t\colon M\to Q, \qquad q=g_t(x).
\]
Here $M=Q$, but we use different symbols to emphasise the domain and codomain of the mapping: the variable $q$ is in the fixed frame, while the variable
$x$ is in the moving reference frame.
Furthermore, in analogy to the angular velocity in the fixed and in the moving frame, we define the time-dependent vector fields $\omega_t\in \mathfrak X(Q)$ and $\Omega_t\in \mathfrak X(M)$
by the identities
\[
\omega_t(q)=\frac{d}{ds}\vert_{s=0}\big(g_{t+s}\circ (g_t)^{-1}(q)), \qquad \Omega_t(x)=\frac{d}{ds}\vert_{s=0}\big((g_{t})^{-1}\circ (g_{t+s})(x)).
\]

Note that $g_t$ is a curve in the Lie group $\text{\sc Diff}(Q)$ and the vector fields $\omega_t$ and $\Omega_t$ are elements of its Lie algebra
$\mathfrak X(Q)=Lie(\text{\sc Diff}(Q))$\footnote{Here we are not interested in the smooth structures of these infinite-dimensional objects} given by the right and left translation of the velocity $\dot g_t$. They are related by the adjoint mapping: $\omega_t=\mathrm{Ad}_{g_t}(\Omega_t)$. Furthermore, as in the case of the rigid body we have the following proposition.

\begin{prop}
(i) The angular velocity vector fields are related as follows
\begin{equation*}\label{ugaone}
dg_t(\Omega_t)\vert_x=\omega_t\vert_{q=g_t(x)}.
\end{equation*}

(ii) {\sc The addition of velocities} Let $\Gamma(t)$ be a smooth curve on $M$ and $\gamma(t)=g_t(\Gamma(t))$ be the associated curve on $Q$. Then
\[
\dot\gamma=dg_t(\dot\Gamma))+\omega_t(g_t(\Gamma(t))).
\]
Conversely, for a given curve $\gamma(t)$ and the associated curve $\Gamma(t)=g_t^{-1}(\gamma(t))$,
we have
\[
\dot\Gamma=dg_t^{-1}(\dot\gamma))-\Omega_t(g_t^{-1}(\gamma(t))).
\]
\end{prop}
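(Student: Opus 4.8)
The plan is to reduce both parts to the chain rule for the map $(t,x)\mapsto g_t(x)$, after first identifying the two angular velocity fields with a single time derivative of $g_t$.

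I would begin with the identity underlying part (i): the \emph{partial} time derivative of $g_t$ at a fixed point $x\in M$ reproduces $\omega_t$ in the fixed frame. Substituting $q=g_t(x)$ into the definition of $\omega_t$ makes the factor $(g_t)^{-1}$ collapse, so that
\[
\omega_t(g_t(x))=\frac{d}{ds}\Big\vert_{s=0}g_{t+s}(x)=:\partial_t g_t(x).
\]
The definition of $\Omega_t$ is precisely the derivative at $s=0$ of the curve $s\mapsto (g_t)^{-1}(g_{t+s}(x))$ in $M$; differentiating it by the chain rule and inserting the previous display gives $\Omega_t(x)=dg_t^{-1}\vert_{g_t(x)}\big(\partial_t g_t(x)\big)=dg_t^{-1}\vert_{g_t(x)}\big(\omega_t(g_t(x))\big)$. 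Applying $dg_t\vert_x$ and using $dg_t\circ dg_t^{-1}=\mathrm{id}$ then yields $dg_t(\Omega_t)\vert_x=\omega_t\vert_{q=g_t(x)}$, which is part (i) (and is the pointwise form of $\omega_t=\Ad_{g_t}\Omega_t$).

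For the first formula in part (ii) I would differentiate $\gamma(t)=g_t(\Gamma(t))$ treating $g$ as a function of the two arguments $t$ and $x$. The total derivative splits into the spatial differential $dg_t\vert_{\Gamma(t)}(\dot\Gamma)$ plus the partial time derivative $\partial_t g_t$ evaluated at $x=\Gamma(t)$; by the first step the latter equals $\omega_t(g_t(\Gamma(t)))$, giving $\dot\gamma=dg_t(\dot\Gamma)+\omega_t(g_t(\Gamma(t)))$. The converse then needs no fresh computation: applying $dg_t^{-1}$ to this equality gives $dg_t^{-1}(\dot\gamma)=\dot\Gamma+dg_t^{-1}\big(\omega_t(g_t(\Gamma))\big)$, and by part (i) the last term is exactly $\Omega_t(\Gamma)=\Omega_t(g_t^{-1}(\gamma))$, so rearranging produces $\dot\Gamma=dg_t^{-1}(\dot\gamma)-\Omega_t(g_t^{-1}(\gamma))$.

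The argument is routine once part (i) is established; the only thing demanding care is the bookkeeping of base points — tracking where each differential $dg_t$ or $dg_t^{-1}$ is evaluated, and distinguishing the partial time derivative $\partial_t g_t$ at fixed $x$ from the total derivative along the moving curve $\Gamma(t)$. This bookkeeping, rather than any genuine analytic difficulty, is the main obstacle.
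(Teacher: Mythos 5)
Your proof is correct, and the base-point bookkeeping (evaluating $dg_t^{-1}$ at $g_t(x)$, distinguishing $\partial_t g_t$ from the total derivative along $\Gamma$) is handled properly. For the comparison: the paper actually states this proposition \emph{without proof}, asserting it by analogy with rigid body kinematics and the Lie-group interpretation of $\omega_t$ and $\Omega_t$ as right/left translates of $\dot g_t$. Your chain-rule argument is exactly the intended one, and it is implicitly present later in the paper: in the proof of Proposition \ref{glavna}, the coordinate expression $\omega_t(g_t(x))=\sum_i \frac{\partial Q^i(x,t)}{\partial t}\frac{\partial}{\partial q^i}$ is your identity $\omega_t(g_t(x))=\partial_t g_t(x)$, and the velocity-addition law \eqref{preslikavanje2} is the local form of your total-derivative splitting $\dot\gamma = dg_t(\dot\Gamma)+\partial_t g_t\vert_{\Gamma(t)}$. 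Your derivation of the converse formula from part (i) by applying $dg_t^{-1}$, rather than by a separate symmetric computation with the roles of $g_t$ and $g_t^{-1}$ exchanged, is a small efficiency; both routes are routine. So your write-up supplies a proof the paper omits, in the form the paper's subsequent computations presuppose.
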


For a given Lagrangian $L\colon TQ\times R\to \R$, we define the associated Lagrangian $l$ in the moving frame by
\begin{equation}\label{noviLagranzijan}
l(x,\dot x,t):=L(q,\dot q,t)\vert_{q=g_t(x),\dot q=dg_t(\dot x)+\omega_t(g_t(x))}.
\end{equation}

The following observation is of fundamental importance for the further explanations.

\begin{prop}\label{glavna}
Let $\Gamma(t)$ and $\xi$ be a smooth curve and a vector field in the moving frame and $\gamma(t)=g_t(\Gamma(t))$ and $\eta=dg_t(\xi)$ the associated curve and the vector field in the fixed frame, $t\in\R$.
Then the variational derivative of $L$ along $\gamma$ in the direction of $\xi$ coincides with the variational derivative of $l$ along $\Gamma$
in the direction of $\eta$:
\begin{equation}\label{invDal}
\delta L(\eta)\vert_{\gamma}=\delta l(\xi)\vert_{\Gamma}.
\end{equation}
\end{prop}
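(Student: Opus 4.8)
The plan is to route the identity \eqref{invDal} through the action functional. This turns the coordinate expression \eqref{varijacioniIzvod} into an intrinsic statement and, crucially, lets me avoid differentiating the time-dependent diffeomorphism $g_t$ directly; the price is that I recover the \emph{pointwise} identity only after a short argument from an \emph{integrated} one.

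First I would record that the two Lagrangians agree along associated curves. For $\gamma(t)=g_t(\Gamma(t))$, the addition of velocities from the preceding proposition gives $\dot\gamma=dg_t(\dot\Gamma)+\omega_t(g_t(\Gamma))$, so substituting $x=\Gamma$, $\dot x=\dot\Gamma$ into the definition \eqref{noviLagranzijan} of $l$ yields $L(\gamma,\dot\gamma,t)=l(\Gamma,\dot\Gamma,t)$ for every $t$. Integrating over $[a,b]$ gives $S_L[\gamma]=S_l[\Gamma]$, and the same holds for every associated pair of curves.

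Next I would transport variations. Let $\Gamma_s$ be a fixed-endpoint variation of $\Gamma$ with variational field $\xi$, and set $\gamma_s(t)=g_t(\Gamma_s(t))$. Since $g_t$ is independent of $s$, the curve $\gamma_s$ is a fixed-endpoint variation of $\gamma$ (the endpoints are preserved because $g_a,g_b$ are fixed diffeomorphisms and $\xi(a)=\xi(b)=0$), and its variational field is $\frac{d}{ds}\vert_{s=0}g_t(\Gamma_s)=dg_t(\xi)=\eta$, precisely the field in the hypothesis. Because $S_L[\gamma_s]=S_l[\Gamma_s]$ for all $s$, the two first variations coincide, and \eqref{izvod} applied on each side gives
\[
\int_a^b \delta L(\eta)\vert_\gamma\,dt=\int_a^b \delta l(\xi)\vert_\Gamma\,dt.
\]

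Finally I would upgrade this to the pointwise statement, and this last step is where the care is needed. The interval $[a,b]$ is arbitrary, and by \eqref{varijacioniIzvod} each variational derivative is, at every instant $t$, a linear function of the value $\xi(t)$ alone, with no dependence on $\dot\xi$; writing $\eta^i=\sum_j(\partial g_t^i/\partial x^j)\xi^j$, the difference $\delta L(\eta)\vert_\gamma-\delta l(\xi)\vert_\Gamma$ takes the form $\sum_j D_j(t)\xi^j(t)$ with coefficients $D_j$ independent of $\xi$. As $\xi$ ranges over all fields vanishing at the (arbitrary) endpoints, the fundamental lemma of the calculus of variations forces $D_j\equiv 0$, which is exactly \eqref{invDal}. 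The purely computational alternative—expanding $\delta l(\xi)\vert_\Gamma$ by the chain rule through $g_t$ and matching it term by term with $\delta L(\eta)\vert_\gamma$—is feasible but would require cancelling the many cross terms produced by $\partial g_t/\partial t$ and by the derivatives of $\omega_t$, and it is precisely this bookkeeping that the action argument renders unnecessary.
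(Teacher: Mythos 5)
Your proof is correct, but it takes a genuinely different route from the paper. The paper's proof is a direct local computation: it writes $g_t$ in coordinates as $q^i=Q^i(x,t)$, substitutes \eqref{preslikavanje1} and \eqref{preslikavanje2} into the definition \eqref{noviLagranzijan} of $l$, and establishes (citing Zubelevich) the explicit transformation law
\[
\frac{\partial l}{\partial x^i}-\frac{d}{dt}\frac{\partial l}{\partial \dot x^i}=\sum_{j=1}^n
\Big(\frac{\partial L}{\partial q^j}-\frac{d}{dt}\frac{\partial L}{\partial \dot q^j}\Big)\frac{\partial Q^j}{\partial x^i},
\]
which, contracted with $\xi^i$, gives \eqref{invDal} pointwise with no integral argument at all --- this is exactly the ``bookkeeping'' computation you chose to avoid. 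Your route instead observes $l(\Gamma_s,\dot\Gamma_s,t)=L(\gamma_s,\dot\gamma_s,t)$ by the very definition of $l$, transports fixed-endpoint variations through $g_t$, equates first variations via \eqref{izvod}, and then recovers the pointwise identity from the fundamental lemma together with the key observation that both sides of \eqref{invDal} are pointwise linear in $\xi(t)$ with coefficients independent of $\xi$ --- this last step is essential, since the integrated identity alone only concerns fields vanishing at endpoints, and you handle it correctly. What each approach buys: the paper's computation is purely local and pointwise, requires no global machinery, and yields the covariance formula for the Euler--Lagrange expressions as an explicit by-product (it is this formula, not just \eqref{invDal}, that the paper reuses in Proposition \ref{glavna2}); your argument trades that formula for conceptual economy, essentially proving the integrated content of Theorem \ref{posledica1} first and then localizing, which inverts the paper's logical order (Proposition \ref{glavna} $\Rightarrow$ Theorem \ref{posledica1}) without circularity, because \eqref{izvod} is established independently beforehand. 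The mild technical debts of your route --- existence of variations realizing a prescribed variational field, a chart-covering argument so the fundamental lemma can be applied componentwise near each instant $t_0$, and smoothness of the coefficients $D_j$ --- are all standard and consistent with the paper's smoothness assumptions.
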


\begin{proof}
Consider local coordinates $q=(q^1,\dots,q^n)$ and $x=(x^1,\dots,x^n)$ defined in the domains $U$ and $V$, neighborhoods of the points $q_0=g_{t_0}(x_0)\in Q$ and $x_0\in M$. The mapping $g_t$ is given by functions
\begin{equation}\label{preslikavanje1}
q^i=Q^i(x,t)=Q^i(x^1,\dots,x^n,t), \qquad i=1,\dots,n,
\end{equation}
defined for some time interval $I$ ($t_0\in I$),
\[
g_t(V)\subset U, \qquad t\in I.
\]
Then the angular velocity vector field in the fixed frame is given by
\begin{equation*}
\omega_t(g_t(x))=\sum_{i=1}^n \frac{\partial Q^i(x,t)}{\partial t}\frac{\partial}{\partial q^i}.
\end{equation*}

The curves $\Gamma(t)$ and $\gamma(t)=g_t(\Gamma(t))$, $t\in I$, are given locally by the functions $x^i=x^i(t)$ and $q^i=Q^i(x(t),t)$.
The law of addition of the velocities is
\begin{equation}\label{preslikavanje2}
\dot q^i=\sum_{j=1}^n \frac{\partial Q^i(x,t)}{\partial x^j}\dot x^j+\frac{\partial Q^i(x,t)}{\partial t}.
\end{equation}

By plugging \eqref{preslikavanje1} and \eqref{preslikavanje2} into \eqref{noviLagranzijan}, after straightforward computations,
we get that along $\Gamma(t)$ we have (see e.g. \cite{Zu})
\begin{equation*}\label{forme}
\frac{\partial l}{\partial x^i}-\frac{d}{dt}\frac{\partial l}{\partial \dot x^i}=\sum_{j=1}^n
\big(\frac{\partial L}{\partial q^j}-\frac{d}{dt}\frac{\partial L}{\partial \dot q^j}\big)\frac{\partial Q^j}{\partial x^i},\qquad i=1,\dots,n,
\end{equation*}
which together with the definitions of $\eta$,
\[
\eta^j\vert_{q=g_t(x)}=\sum_{k=1}^n \frac{\partial Q^j}{\partial x^k}\xi^k\vert_x, \qquad j=1,\dots,n,
\]
and the variational derivative \eqref{varijacioniIzvod}, imply the statement of the proposition.
\end{proof}

\begin{rem}
If we consider the functions $Q^i$ that do not depend on time ($\omega_t\equiv 0$), the proof of Proposition \ref{glavna} is actually the proof that the variational derivative \eqref{varijacioniIzvod} does not depend on the coordinate system of the configuration space $Q$.
\end{rem}

By setting both sides of the equality \eqref{invDal} to zero,
Proposition \ref{glavna} can be interpreted as:
\emph{If a motion satisfies the  d'Alambert principle in one reference frame, then it satisfies the principle in arbitrary reference frame}

As a direct consequence, we also obtain the invariance of the Hamitonian principle of least action.

\begin{thm} \label{posledica1}
Let $x_0=g_a^{-1}(q_0)$ and $x_1=g_b^{-1}(q_1)$.
A curve $\gamma(t)$ in fixed space $Q$ is the critical point of the action integral \eqref{action} in the class of curves \eqref{krive} if and only if the curve $\Gamma(t)=g_t^{-1}(\gamma(t))$ in the moving frame $M$
the critical point of the action integral
\begin{equation*}\label{action*}
S_l({\Gamma})=\int_a^b l(\Gamma,\dot\Gamma,t)dt
\end{equation*}
in a class of curves
\begin{equation*}\label{krive*}
\varOmega(M,x_0,x_1,a,b)=\{
{\Gamma}\colon [a,b]\to M, \, {\Gamma}(a)=x_0, {\Gamma}(b)=x_1\}.
\end{equation*}
\end{thm}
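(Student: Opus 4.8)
The plan is to deduce the statement directly from Proposition \ref{glavna} together with the characterization of critical points supplied by \eqref{izvod}. Recall that, by \eqref{izvod}, a curve is a critical point of an action integral precisely when its first variation
\[
\int_a^b \delta L(\eta)\vert_\gamma\, dt
\]
vanishes for every admissible variation field, i.e.\ for every time-dependent vector field along the curve that vanishes at the endpoints $t=a$ and $t=b$. Thus the theorem reduces to matching admissible variations in the two frames and invoking the pointwise identity \eqref{invDal}.

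First I would verify that $\gamma\mapsto\Gamma=g_t^{-1}(\gamma)$ is a bijection between $\varOmega(Q,q_0,q_1,a,b)$ and $\varOmega(M,x_0,x_1,a,b)$. This is immediate from the endpoint relations in the hypothesis: since $g_a$ and $g_b$ are fixed diffeomorphisms, $\gamma(a)=q_0$ is equivalent to $\Gamma(a)=g_a^{-1}(q_0)=x_0$, and likewise $\gamma(b)=q_1$ is equivalent to $\Gamma(b)=x_1$. Next I would lift this to the level of variations: given a variation $\Gamma_s$ of $\Gamma$ inside $\varOmega(M,x_0,x_1,a,b)$, the pushforward $\gamma_s=g_t(\Gamma_s)$ is a variation of $\gamma$ inside $\varOmega(Q,q_0,q_1,a,b)$ by the previous step, and differentiating in $s$ at $s=0$ (using that $g_t$ does not depend on $s$) gives
\[
\eta=\frac{d}{ds}\vert_{s=0}\gamma_s=dg_t\big(\frac{d}{ds}\vert_{s=0}\Gamma_s\big)=dg_t(\xi),
\]
which is exactly the relation $\eta=dg_t(\xi)$ required in Proposition \ref{glavna}. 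Because $dg_t$ is a pointwise linear isomorphism, this is a bijection $\xi\leftrightarrow\eta=dg_t(\xi)$ between admissible variation fields; in particular $\xi$ vanishes at $t=a,b$ if and only if $\eta$ does.

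With this correspondence in hand I would apply Proposition \ref{glavna} to obtain $\delta L(\eta)\vert_\gamma=\delta l(\xi)\vert_\Gamma$ pointwise in $t$, and hence, after integration,
\[
\int_a^b \delta L(\eta)\vert_\gamma\, dt=\int_a^b \delta l(\xi)\vert_\Gamma\, dt.
\]
The first variation of $S_L$ at $\gamma$ in the direction $\eta$ therefore coincides with the first variation of $S_l$ at $\Gamma$ in the direction $\xi$. Since the pair $\xi\leftrightarrow\eta$ ranges over all admissible variation fields in both frames, the left-hand integral vanishes for every admissible $\eta$ if and only if the right-hand one vanishes for every admissible $\xi$, which is precisely the claimed equivalence of critical points.

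The step requiring the most care is this bijection of admissible variations: one must check both that pushing a fixed-endpoint variation forward by $g_t$ again fixes the endpoints, and that the induced map on variation fields \emph{exhausts} all admissible fields in the fixed frame, so that the quantifier ``for all $\eta$'' on one side genuinely matches ``for all $\xi$'' on the other. Once this is secured, the rest is a direct application of \eqref{invDal}.
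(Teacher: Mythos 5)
Your proof is correct and takes exactly the route the paper intends: the paper gives no separate proof of Theorem \ref{posledica1}, presenting it as a ``direct consequence'' of Proposition \ref{glavna}, and your argument is precisely the omitted elaboration. The endpoint-preserving correspondence $\gamma\leftrightarrow\Gamma=g_t^{-1}(\gamma)$ of curve classes, the induced bijection $\eta=dg_t(\xi)$ of fixed-endpoint variation fields, and the combination of \eqref{invDal} with the first-variation formula \eqref{izvod} are exactly the ingredients the paper relies on implicitly.
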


\begin{rem}
Let us recall on the Galilean principle of relativity (see \cite{Ar, P1}):

\begin{itemize}

\item \emph{All laws of nature are the same at all times in all inertial coordinate systems}.

\item \emph{A coordinate system that is in uniform rectilinear motion with respect to an inertial frame is also inertial}.

\end{itemize}

Depending on the geometric structure of the 4-dimensional affine space-time, we obtain two different mechanics: classical and special relativity, which have the same principle of relativity (see e.g. \cite{Jo}).
On the other hand, d'Alambert's principle satisfies the following general variant of the principle of relativity, which does not include any notion of inertial frames:

\begin{itemize}

\item \emph{All laws of nature are the same at all times in all reference systems}

\end{itemize}

Again, with suitable geometric structures on space-time manifolds, we can obtain both the general theory of relativity and classical mechanics. 
The invariant formulation of classical Lagrangian mechanics on a space-time manifold is well known
(see e.g. \cite{MP, GM} and references therein). Here, in section \ref{s4}, we have tried to present it with minimal technical requirements.
\end{rem}

\section{D'Alambert principle for systems with holonomic constraints}\label{s3}

\subsection{Time-dependent holonomic constraints}

We now consider the
Lagrangian system $(Q,L,\Upsilon_t)$ in which a motion $\gamma(t)$ is restricted to time-dependent $m$--dimensional \emph{immersed} submanifolds $\Upsilon_t=g_t(\Sigma)$,
\[
g_t\colon \Sigma\to \Upsilon_t \subset Q, \qquad t\in\R,
\]
without selfintersections.
The curves $\gamma\colon \R\to Q$ that satisfy $\gamma(t)\in\Upsilon_t$ ($t\in\R$) are called \emph{admissible}.

Typically, we have a situation like in the previous section, where $\Sigma$ is a fixed $m$--dimensional immersed submanifold of $M=Q$ without selfintersections.
In mechanics, the above restrictions on a motion of the system are usually called time-dependent (or rheonomic) \emph{holonomic constraints}.

\begin{rem}
Usually one considers \emph{embedded} submanifolds $\Upsilon_t=g_t(\Sigma)$, $t\in\R$. However, we can assume a non-holonomic system on $Q$ with integrable constraints.
As a result, $Q$ is foliated on integral immersed submanifolds without selfintersections and we can consider a motion on a single integral leaf. For example, let us consider a system describing a rolling without sliding of a disk of radius $r$ over a disk of radius $R$. The configuration space is a torus $\mathbb T^2$. It is known that the corresponding nonholonomic constraint is integrable. If $r/R\notin\mathbb Q$, the configuration space is foliated on everywhere dense integral curves of the distribution.
\end{rem}

Consider local coordinates $q=(q^1,\dots,q^n)$ and $x=(x^1,\dots,x^m)$, defined in the domains $U$ and $V$, neighborhoods of the points $q_0=g_{t_0}(x_0)\in Q$ and $x_0\in \Sigma$.
The mapping $g_t$ is given locally by functions
\begin{equation}\label{preslikavanje3}
q^i=Q^i(x,t)=Q^i(x^1,\dots,x^m,t), \qquad i=1,\dots,n,
\end{equation}
defined for some time interval $I$ ($t_0\in I$),
\[
g_t(V)\subset U\subset Q, \qquad t\in I.
\]

Locally, the submanifolds $g_t(V)\subset \Upsilon_t$ can be given by equations
\begin{equation}\label{veze}
f_\alpha(q^1,\dots,q^n,t)=0, \qquad \alpha=1,\dots,n-m,
\end{equation}
where the rank of the matrix $\partial f_\alpha/\partial q^j$ is equal to $n-m$ and
\[
f_\alpha(Q^1(x^1,\dots,x^m,t),\dots, Q^n(x^1,\dots,x^m,t), t) \equiv 0, \quad x \in V\subset \Sigma, \quad t\in I.
\]

Now we can only define the time-dependent angular velocity vector field $\omega_t$ as a section of $T_{\Upsilon_t}Q$:
\[
\omega_t(q)=\frac{d}{ds}\vert_{s=0}\big(g_{t+s}\circ (g_t)^{-1}(q)), \quad q=g_t(x), \quad x\in\Sigma,
\]
i.e,
\[
\omega_t(g_t(x))=\sum_{i=1}^n \frac{\partial Q^i(x,t)}{\partial t}\frac{\partial}{\partial q^i}, \qquad x\in \Sigma.
\]

Let $\Gamma$ be a smooth curve on $\Sigma$ and let $\gamma(t)=g_t(\Gamma(t))$ be the associated curve on $Q$. Then, as in the case of moving frames considered above, we have
\[
\dot\gamma=dg_t(\dot\Gamma)+\omega_t(g_t(\Gamma(t))),
\]
or locally,
\begin{equation}\label{preslikavanje4}
\dot q^i=\sum_{j=1}^m \frac{\partial Q^i(x,t)}{\partial x^j}\dot x^j+\frac{\partial Q^i(x,t)}{\partial t}.
\end{equation}

In other words, the \emph{admissible velocities} (velocities allowed by holonomic constraints) at the point $q=g_t(x)\in\Upsilon_t$ belong to the affine subspace of the tangent bundle $T_{\Upsilon_t} Q$:
\[
\mathcal A_{q,t}=\omega_t(g_t(\Gamma(t)))+dg_t(T_x\Sigma))=\omega_t(g_t(\Gamma(t)))+T_q \Upsilon_t \subset T_q Q.
\]

The vectors that are tangent to the constraint submanifolds $\eta\in T_q \Upsilon_t\subset T_q Q$ and $\xi\in T_x\Sigma\subset T_x M$ are called \emph{virtual displacements}:
\[
\mathcal V_{q,t}=T_q \Upsilon_t=dg_t(\mathcal V_{x}), \qquad \mathcal V_{x}=T_x\Sigma.
\]
Note that in the ``moving frame'' $\Sigma\subset M$ the space of admissible velocities coincides with the space of virtual displacements: $\mathcal A_x=\mathcal V_x=T_x\Sigma$.

By using the constraint \eqref{veze}, the spaces of admissible velocities and virtual displacements are usually described by the equations
\[
\frac{\partial f_\alpha}{\partial t}+\sum_{i=1}^n \frac{\partial f_\alpha}{\partial q^i}\xi^i=0, \quad \text{and} \quad
\sum_{i=1}^n \frac{\partial f_\alpha}{\partial q^i}\xi^i=0,
\quad \alpha=1,\dots,n-m,
\]
respectively.

\subsection{External description of the dynamics: d'Alembert principle}

In classical mechanics, dynamics in the case of \emph{ideal holonomic constraints} is defined by the \emph{d'Alembert principle}:
a curve $\gamma(t)\in\Upsilon_t$ is a motion of the constrained Lagrangian system $(Q,L,\Upsilon_t)$ if the variational derivative $\delta L(\eta)\vert_\gamma$ vanishes for all virtual displacements $\eta$ along $\gamma$:
\begin{equation}\label{dalamber}
\delta L(\eta)\vert_\gamma=\sum_{i=1}^n \big(\frac{\partial L}{\partial q^i}-\frac{d}{dt}\frac{\partial L}{\partial \dot q^i}\big)\eta^i\vert_{\gamma}=0, \qquad \eta\vert_{\gamma(t)}\in \mathcal V_{\gamma(t)}=T_{\gamma(t)}\Upsilon_t.
\end{equation}

In the case of ideal holonomic constraints, the d'Alambert principle is equivalent to the Hamiltonian principle of least action:
the constrained motions are critical points of the action integral \eqref{action}
in a class of curves
\begin{equation}\label{krive2}
\varOmega=\varOmega(\Upsilon_t,q_0,q_1,a,b)=\{
{\gamma}\colon [a,b]\to Q, \, \gamma(t)\in\Upsilon_t,\, {\gamma}(a)=q_0, {\gamma}(b)=q_1\}.
\end{equation}
Namely, let $\gamma_s\in\varOmega$, $s\in(-\epsilon,\epsilon)$ be a smooth variation of $\gamma\in\varOmega$, $\gamma_0=\gamma$.
Then
\[
\eta(t)=\frac{d}{ds}\vert_{s=0}(\gamma_s)\in \mathcal V_{\gamma(t)}=T_{\gamma(t)}\Upsilon_t,
\]
and the statements follow from the identity \eqref{izvod}.

On the other hand the Hamiltonian principle of least action is not equivalent to the d'Alambert principle in the case of non-holonomic cost constraints.
One of the attempts to formulate non-holonomic dynamics from the variational principle is given by Kozlov (see e.g. \cite{AKN}).
We will discuss the non-holonomic constraints in a separate paper.

\subsection{Intrinsical formulation of the dynamics}

As in the cases of moving frames, for a given Lagrangian $L\colon TQ\times R\to \R$ we define the Lagrangian $l_\Sigma$ by
\begin{equation}\label{noviLagranzijan2}
l_\Sigma(x,\dot x,t):=L(q,\dot q,t)\vert_{q=g_t(x),\dot q=dg_t(\dot x)+\omega_t(g_t(x))}.
\end{equation}

Instead of solving the system \eqref{dalamber} in redundant variables $(q^1,\dots,q^n)$ using the method of Lagrange multipliers, it is more convenient to consider a standard Lagrangian system $(\Sigma,l_\Sigma)$. We can reduce the problem using the following analogue of Proposition \ref{glavna}.

\begin{prop}\label{glavna2}
Let $\Gamma(t)$ and $\xi$ be a smooth curve and a vector field of virtual displacements along $\Gamma$ in $\Sigma$ and let $\gamma(t)=g_t(\Gamma(t))$ and $\eta=dg_t(\xi)$ be the associated curve and the vector field of virtual displacements in $Q$. Then
\begin{equation}\label{jednakost}
\delta L(\eta)\vert_\gamma=\delta l(\xi)\vert_\Gamma.
\end{equation}
\end{prop}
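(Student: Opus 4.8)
The plan is to follow the proof of Proposition~\ref{glavna} essentially verbatim; the only new ingredient is the bookkeeping forced by the drop in dimension from $n=\dim Q$ to $m=\dim\Sigma$. First I would fix local coordinates as in \eqref{preslikavanje3}: $q=(q^1,\dots,q^n)$ on $Q$ and $x=(x^1,\dots,x^m)$ on $\Sigma$, with $g_t$ given by $q^i=Q^i(x,t)$, so that along the associated curve we have $q^i(t)=Q^i(x(t),t)$ and the velocities add according to \eqref{preslikavanje4}.

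The key observation is that the chain-rule computation behind Proposition~\ref{glavna} is purely local and nowhere uses the invertibility of $g_t$. The Lagrangian $l_\Sigma$ of \eqref{noviLagranzijan2} is built from $L$ by exactly the same substitution as $l$ in \eqref{noviLagranzijan}, the sole difference being that the moving-frame coordinate now runs over $m$ indices. Thus the same manipulation produces the Euler--Lagrange transformation identity
\[
\frac{\partial l_\Sigma}{\partial x^i}-\frac{d}{dt}\frac{\partial l_\Sigma}{\partial \dot x^i}=\sum_{j=1}^n\Big(\frac{\partial L}{\partial q^j}-\frac{d}{dt}\frac{\partial L}{\partial \dot q^j}\Big)\frac{\partial Q^j}{\partial x^i},\qquad i=1,\dots,m,
\]
now valid for the reduced range $i=1,\dots,m$.

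Next I would substitute the relation between the two vector fields of virtual displacements. Because $\eta=dg_t(\xi)$, we have $\eta^j=\sum_{k=1}^m(\partial Q^j/\partial x^k)\,\xi^k$ for $j=1,\dots,n$. Feeding the identity above into the definition \eqref{varijacioniIzvod} of the variational derivative of $l_\Sigma$, swapping the order of summation, and recognising the inner sum over $i$ as $\eta^j$ yields
\[
\delta l_\Sigma(\xi)\vert_\Gamma=\sum_{i=1}^m\Big(\frac{\partial l_\Sigma}{\partial x^i}-\frac{d}{dt}\frac{\partial l_\Sigma}{\partial \dot x^i}\Big)\xi^i=\sum_{j=1}^n\Big(\frac{\partial L}{\partial q^j}-\frac{d}{dt}\frac{\partial L}{\partial \dot q^j}\Big)\eta^j=\delta L(\eta)\vert_\gamma,
\]
which is exactly \eqref{jednakost}.

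The one point deserving comment rather than constituting a real obstacle is that $g_t$ is here only an immersion, so $\eta=dg_t(\xi)$ is confined to the virtual-displacement subspace $T_q\Upsilon_t$ instead of sweeping out all of $T_qQ$. This, however, does not interfere with the argument: the transformation identity is pointwise and holds for each $\eta$ of the form $dg_t(\xi)$ individually, so no inverse of $dg_t$ is ever needed and the reasoning of Proposition~\ref{glavna} transfers without change.
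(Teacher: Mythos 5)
Your proof is correct and follows essentially the same route as the paper's: the paper likewise reduces Proposition~\ref{glavna2} to the computation of Proposition~\ref{glavna}, deriving the transformation identity for the Euler--Lagrange expressions with the index range $i=1,\dots,m$ and combining it with $\eta^j=\sum_{k=1}^m(\partial Q^j/\partial x^k)\xi^k$ and \eqref{varijacioniIzvod}. Your closing remark that only the immersion property (never the invertibility of $g_t$) is used is a correct and useful clarification of why the argument transfers.
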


\begin{proof}
The proof is the same as in the case of Proposition \ref{glavna}.
By setting \eqref{preslikavanje3} and \eqref{preslikavanje4} into \eqref{noviLagranzijan2}, after straightforward computations,  along $\Gamma(t)$ we obtain
\begin{equation*}\label{forme2}
\frac{\partial l}{\partial x^i}-\frac{d}{dt}\frac{\partial l}{\partial \dot x^i}=\sum_{j=1}^n
\big(\frac{\partial L}{\partial q^j}-\frac{d}{dt}\frac{\partial L}{\partial \dot q^j}\big)\frac{\partial Q^j}{\partial x^i}, \qquad i=1,\dots,m,
\end{equation*}
which together with the definitions of $\eta$,
\[
\eta^j\vert_{q=g_t(x)}=\sum_{k=1}^m \frac{\partial Q^j}{\partial x^k}\xi^k\vert_x, \qquad j=1,\dots,n,
\]
and the variational derivative \eqref{varijacioniIzvod}, imply \eqref{jednakost}.
\end{proof}

Since $T_{g_t(x)}\Upsilon_t=dg_t(T_x\Sigma)$, we get.

\begin{thm}\label{posledica2}
A curve $\gamma(t)\in\Upsilon_t$ is a motion of the constrained Lagrangian system $(Q,L,\Upsilon_t)$ if and only if $\Gamma(t)=g_t^{-1}(\gamma(t))$ is a motion of the Lagrangian system $(\Sigma,l_\Sigma)$. This means that in local coordinates $(x^1,\dots,x^m)$ on $\Sigma$ a curve $\Gamma\colon x(t)=(x^1(t),\dots,x^m(t))$ is a solution of the Euler--Lagrange equations
\begin{equation*} \label{EL2}
\frac{d}{dt}\frac{\partial l_\Sigma}{\partial \dot x^i}=\frac{\partial l_\Sigma}{\partial x^i}, \qquad i=1,\dots,m.
\end{equation*}
\end{thm}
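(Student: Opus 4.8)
The plan is to reduce the claimed equivalence to a short chain of logical equivalences built on the two facts already in hand: the pointwise identification $T_{g_t(x)}\Upsilon_t = dg_t(T_x\Sigma)$ recorded just before the theorem, and the equality of variational derivatives established in Proposition \ref{glavna2}. All the analytic labor has already been absorbed into that proposition, so what remains is essentially a careful matching of quantifiers.

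First I would observe that, since $g_t\colon\Sigma\to\Upsilon_t$ is an immersion without self-intersections, its differential restricts at each point $x=\Gamma(t)$ to a linear isomorphism $dg_t\colon T_x\Sigma \xrightarrow{\sim} T_{g_t(x)}\Upsilon_t$ onto the space of virtual displacements. Consequently, as $\xi$ ranges over all of $\mathcal V_x = T_x\Sigma$, the associated field $\eta=dg_t(\xi)$ ranges over all of $\mathcal V_{\gamma(t)} = T_{\gamma(t)}\Upsilon_t$, and conversely every admissible virtual displacement $\eta$ in $Q$ arises in this way from a unique $\xi$.

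Next I would run the equivalences. By definition $\gamma$ is a motion of $(Q,L,\Upsilon_t)$ precisely when the d'Alembert principle \eqref{dalamber} holds, i.e.\ $\delta L(\eta)\vert_\gamma=0$ for every virtual displacement $\eta\in T_{\gamma(t)}\Upsilon_t$. Combining the isomorphism above with the identity $\delta L(\eta)\vert_\gamma=\delta l_\Sigma(\xi)\vert_\Gamma$ of Proposition \ref{glavna2}, this is equivalent to $\delta l_\Sigma(\xi)\vert_\Gamma=0$ for every $\xi\in T_{\Gamma(t)}\Sigma$. Since $\Sigma$ is the \emph{full} configuration space of the reduced system $(\Sigma,l_\Sigma)$, here $\xi$ runs over the whole tangent space, so the condition is the ordinary (unconstrained) d'Alembert principle on $\Sigma$. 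Taking $\xi=\partial/\partial x^i$ for $i=1,\dots,m$ in the coordinate expression \eqref{varijacioniIzvod} applied to $l_\Sigma$, the vanishing of $\delta l_\Sigma(\xi)\vert_\Gamma$ for all $\xi$ is equivalent to the vanishing of each coefficient $\partial l_\Sigma/\partial x^i-(d/dt)(\partial l_\Sigma/\partial\dot x^i)$ along $\Gamma$, that is, to the Euler--Lagrange equations for $l_\Sigma$. This closes the chain and yields the stated equivalence.

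The only genuinely delicate point is the bookkeeping of the quantifiers: one must verify that $\xi\mapsto\eta=dg_t(\xi)$ is a \emph{bijection} between the two spaces of virtual displacements, so that ``for all admissible $\eta$'' in $Q$ translates into ``for all $\xi$'' in $\Sigma$ with no loss in either direction. This is immediate from the remark $T_{g_t(x)}\Upsilon_t=dg_t(T_x\Sigma)$ together with the injectivity of $dg_t$ on $T_x\Sigma$, so no further argument is needed.
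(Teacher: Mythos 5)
Your proposal is correct and follows exactly the paper's route: the paper derives the theorem from Proposition \ref{glavna2} together with the single remark ``Since $T_{g_t(x)}\Upsilon_t=dg_t(T_x\Sigma)$, we get,'' which is precisely the quantifier-matching argument (bijection of virtual displacements under $dg_t$, then equivalence of the constrained d'Alembert principle on $Q$ with the unconstrained one on $\Sigma$, hence with the Euler--Lagrange equations for $l_\Sigma$) that you spell out in detail.
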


Furthermore, for systems with constraints we also have

\begin{thm} \label{posledica3}
Let $x_0=g_a^{-1}(q_0)$ and $x_1=g_b^{-1}(q_1)$.
A curve $\gamma(t)$ in fixed space $Q$ is the critical point of the action integral \eqref{action} in a class of curves satisfying the constraints \eqref{krive2} if and only if the curve $\Gamma(t)=g_t^{-1}(\gamma(t))$
is the critical point of the action integral
\begin{equation*}\label{action**}
S_{l_\Sigma}({\Gamma})=\int_a^b l_\Sigma(\Gamma,\dot\Gamma,t)dt
\end{equation*}
in a class of curves
$\varOmega(\Sigma,\mathbf x_0,\mathbf x_1,a,b)=\{
{\Gamma}\colon [a,b]\to \Sigma, \, {\Gamma}(a)=\mathbf x_0, {\Gamma}(b)=\mathbf x_1\}$.
\end{thm}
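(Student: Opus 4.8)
The plan is to exhibit an explicit bijection between the two classes of admissible curves under which the two action integrals literally coincide, so that the equivalence of critical points becomes immediate. The crucial observation is that the Lagrangian $l_\Sigma$ was defined in \eqref{noviLagranzijan2} precisely so that its value along a curve $\Gamma$ in $\Sigma$ equals the value of $L$ along the associated curve $\gamma(t)=g_t(\Gamma(t))$ in $Q$; this is what turns the passage from the constrained problem on $Q$ into the free problem on $\Sigma$ into a genuine change of variables rather than merely a comparison of Euler--Lagrange equations.

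First I would set up the correspondence $\Phi\colon\Gamma\mapsto\gamma$, $\gamma(t)=g_t(\Gamma(t))$, and check that it maps $\varOmega(\Sigma,\mathbf x_0,\mathbf x_1,a,b)$ bijectively onto the constrained class \eqref{krive2}. Indeed, $\Gamma(t)\in\Sigma$ forces $\gamma(t)\in g_t(\Sigma)=\Upsilon_t$; the endpoint conditions match because $x_0=g_a^{-1}(q_0)$ and $x_1=g_b^{-1}(q_1)$; and the inverse is $\gamma\mapsto g_t^{-1}(\gamma)$, which is well defined since $g_t$ is injective on $\Sigma$. Next I would verify that $\Phi$ preserves the action. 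By \eqref{noviLagranzijan2} together with the velocity-addition law \eqref{preslikavanje4}, which asserts that $\dot\gamma=dg_t(\dot\Gamma)+\omega_t(g_t(\Gamma))$ is exactly the velocity at which $L$ is evaluated, one obtains $l_\Sigma(\Gamma(t),\dot\Gamma(t),t)=L(\gamma(t),\dot\gamma(t),t)$ for every $t$, hence $S_{l_\Sigma}(\Gamma)=S_L(\Phi(\Gamma))$.

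With these two facts the theorem follows. For any variation $\Gamma_s$ of $\Gamma$ inside $\Sigma$ with fixed endpoints, $\gamma_s=\Phi(\Gamma_s)$ is an admissible variation of $\gamma$ in \eqref{krive2} with variation field $\eta=dg_t(\xi)$, where $\xi=\frac{d}{ds}\vert_{s=0}\Gamma_s$. Differentiating the identity $S_{l_\Sigma}(\Gamma_s)=S_L(\gamma_s)$ at $s=0$ and invoking the first-variation formula \eqref{izvod} gives $\int_a^b\delta l_\Sigma(\xi)\vert_\Gamma\,dt=\int_a^b\delta L(\eta)\vert_\gamma\,dt$, which is of course also the content of Proposition \ref{glavna2}. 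Since $\Phi$ is a bijection of the two variation classes and $\eta=dg_t(\xi)$ ranges over all admissible displacements along $\gamma$ exactly as $\xi$ ranges over all displacements along $\Gamma$, one integral vanishes for every admissible variation precisely when the other does; that is, $\gamma$ is a critical point of $S_L$ if and only if $\Gamma$ is a critical point of $S_{l_\Sigma}$.

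The main obstacle is not analytic but a matter of care with the immersed constraint manifold. Because $\Sigma$ (equivalently $\Upsilon_t$) is only immersed without self-intersections rather than embedded, one must read an admissible curve in $Q$ as one of the form $g_t(\Gamma(t))$ with $\Gamma$ smooth into the abstract manifold $\Sigma$, so that $g_t^{-1}$ and hence $\Phi^{-1}$ act smoothly on the relevant curves; the class \eqref{krive2} is set up for exactly this reading. The only remaining point to check is the standard fact that every variation field $\xi$ along $\Gamma$ vanishing at the endpoints is realized by some genuine variation $\Gamma_s$ inside $\Sigma$, which follows locally from the charts introduced around \eqref{preslikavanje3}.
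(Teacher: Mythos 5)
Your proof is correct, but it takes a genuinely more direct route than the paper. The paper states this theorem without a separate argument: it is meant to follow by chaining together the equivalence of the d'Alambert principle \eqref{dalamber} with the Hamiltonian principle on the constrained class \eqref{krive2} (justified via \eqref{izvod}), the identity of variational derivatives in Proposition \ref{glavna2}, and Theorem \ref{posledica2}, so that criticality of $\gamma$ is first translated into the Euler--Lagrange equations for $l_\Sigma$ and then back into criticality of $\Gamma$. You instead observe that the correspondence $\Phi\colon \Gamma \mapsto g_t(\Gamma(t))$ is a bijection between the two classes of curves under which the actions literally coincide, $S_{l_\Sigma}(\Gamma)=S_L(\Phi(\Gamma))$ --- immediate from the definition \eqref{noviLagranzijan2} together with the velocity-addition law \eqref{preslikavanje4} --- and that $\Phi$ carries fixed-endpoint variations in $\Sigma$ bijectively onto admissible fixed-endpoint variations in \eqref{krive2}. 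Criticality then transfers trivially, with no need for the coordinate computation behind Proposition \ref{glavna2} and no appeal to the d'Alambert principle. Note also that the ``remaining point'' you flag at the end (realizing every displacement field $\xi$ by a genuine variation $\Gamma_s$) is not actually needed for your argument: your bijection of variations already closes the equivalence of critical points; that realization lemma is only required if one wants to identify critical points with solutions of the Euler--Lagrange equations, which is the content of Theorem \ref{posledica2}, not of this theorem. What the paper's route buys is precisely that stronger differential-equation characterization along the way; what your route buys is economy and robustness, including the careful reading of the immersed (non-embedded) constraint manifolds, which you correctly identify as the only genuine point of care.
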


In particular, we can consider the time-independent constraints,
\[
g_t \equiv \imath,
\]
where $\imath\colon \Sigma \hookrightarrow Q$ is the inclusion of the submanifold $\Sigma$ in $Q$.
Then
\[
l_\Sigma=L\vert_{T\Sigma}
\]
and Theorem \ref{posledica2} becomes the standard statement for systems with holonomic constraints that do not depend on time \cite{Ar, AKN}.

\begin{rem}
All considerations are valid without the assumption that the Lagrangian is regular and are derived without the use of Lagrange multipliers. Therefore, the uniqueness of the solution is not considered (see \cite{GM}). In the case where the Lagrangian is regular, we can apply the Legendre transformation and proceed to the Hamiltonian description of the time-dependent mechanical systems. In this case, we have studied Noether symmetries and integrability in contact and cosymplectic frameworks \cite{Jo2, Jo3}.
\end{rem}

\section{Space-time formulation of Lagrangian mechanics}\label{s4}

\subsection{Space-time and reference frames}

A space-time manifold in classical Lagrangian mechanics is an $(n+1)$--dimensional fiber manifold over real numbers
\begin{equation}\label{fibracija}
\tau\colon \mathcal Q \longrightarrow \R,
\end{equation}
where the fibers are diffeomorphic to an $n$--dimensional configuration space $Q$.

The points $\mathbf q$ in $\mathcal Q$ are called \emph{events} and the fibers $\tau^{-1}(a)$, $a\in\R$, are called spaces of \emph{simultaneous events}.
We say that the event $\mathbf q_0$ occurred before the event $\mathbf q_1$ if $\tau(\mathbf q_0)<\tau(\mathbf q_1)$.
A \emph{time line} (or \emph{world line})
is a smooth curve $s(t)$, a section of the fibration \eqref{fibracija}, $\tau(s(t))=t$.
 A time line $s(t)$, $t\in[a,b]$ is between (or connects) the events $\mathbf q_0$ and $\mathbf q_1$ if $s(a)=\mathbf q_0$
and $s(b)=\mathbf q_1$.

The space of \emph{virtual displacements} is a subbundle of $T\mathcal Q$, the vertical distribution of the fibration \eqref{fibracija}, defined by
\[
\mathcal V=\cup_{\mathbf q\in\mathcal Q}\mathcal V_{\mathbf q}, \qquad \mathcal V_{\mathbf q}=\ker d\tau\vert_{\mathbf q}=
\{\underline\xi\in T_{\mathbf q} \mathcal Q, d\tau\vert_{\mathbf q}(\underline \xi)=0\}.
\]

Since for time lines we have $d\tau(\dot{s}(t))=1$,
we also consider the affine subbundle of $T\mathcal Q$ (the first jet bundle \cite{MP}),
\[
\mathcal J=\cup_{\mathbf q\in\mathcal Q}\mathcal J_{\mathbf q}, \qquad
\mathcal J_{\mathbf q}=\{\underline\xi\in T_{\mathbf q} \mathcal Q, d\tau\vert_{\mathbf q}(\underline\xi)=1\}.
\]
It is clear that $\mathcal J$ is diffeomorphic to $\mathcal V$.

The (global) \emph{reference frame} is a trivialization
\begin{align*}
&\varphi_\alpha\colon \mathcal Q \longrightarrow Q_\alpha\times \R, \qquad Q_\alpha \cong Q,\\
& \varphi_\alpha(\mathbf q)=(q_\alpha,t_\alpha),
\end{align*}
such that
\[
\tau(\varphi^{-1}_\alpha(q_\alpha,t_\alpha))=t_\alpha+c_\alpha, \qquad c_\alpha\in \R.
\]
In other words: In the reference frame $\varphi_\alpha$ we set the time $t_\alpha$ to zero at the space of simultaneous events $\tau^{-1}(c_\alpha)$.

The vertical space $\mathcal V$ at $\mathbf q$ in the frame $\varphi_\alpha$ can be naturally identified with $T_q Q_\alpha\times \R$
\[
\underline \eta\in \mathcal V_{\mathbf q} \longleftrightarrow \eta_\alpha\in T_{q_\alpha} Q_\alpha \times\{t_\alpha\}, \quad (\eta_\alpha,0)=d\varphi_\alpha\vert_{\mathbf q}(\underline\eta), \quad (q_\alpha,t_\alpha)=\varphi_\alpha(\mathbf q).
\]

If we have two reference frames $\varphi_\alpha$ and $\varphi_\beta$, the transition function is defined by
\[
\phi_{\alpha\beta}=\varphi_\alpha\circ\varphi_\beta^{-1}\colon Q_\beta\times \R \longrightarrow Q_\beta\times \R
\]
is of the form
$
(q_\alpha,t_\alpha)=\phi_{\alpha\beta}(q_\beta,t_\beta)=\big(g_{\alpha\beta}(q_\beta,t_\beta),t_\beta+(c_\beta-c_\alpha)\big).
$

\subsection{Dynamics}
The Lagrangian $\mathbf L$ is a smooth function
\[
\mathbf L\colon\mathcal J\longrightarrow\R.
\]

Let $\gamma_\alpha(t_\alpha)$ be a curve in $Q_\alpha$. To $\gamma_\alpha$ we associate the time curve
\[
s(t)=\varphi_\alpha^{-1}((\gamma_\alpha(t_\alpha),t_\alpha))\vert_{t=t_\alpha+c_\alpha},
\]
and vice versa.
Since $\mathcal V$ and $\mathcal J$ are diffeomorphic, $\mathcal J$ can be identified with $TQ_\alpha\times \R$. In the context of this identification, the Lagrangian $\mathbf L$ in the reference frame $\varphi_\alpha$ is given by
\begin{align*}
& L_\alpha\colon TQ_\alpha\times \R \longrightarrow \R,\\
& L_\alpha(q_\alpha,\dot q_\alpha,t_\alpha)\vert_{q_\alpha=\gamma_\alpha(t_\alpha)}:=\mathbf L(\mathbf q,\dot{\mathbf q})\vert_{\mathbf q=s(t)=\varphi_\alpha^{-1}((\gamma_\alpha(t_\alpha),t_\alpha))\vert_{t=t_\alpha+c_\alpha}}.
\end{align*}

With the above notation, the \emph{variational derivative} of the Lagrangian $\mathbf L$ in the direction of the vector field of the virtual displacement $\underline\eta$ along the time line $s$ is defined by
\[
\delta\mathbf L(\underline\eta)\vert_{s}:=\delta L_\alpha(\eta_\alpha)\vert_{\gamma_\alpha}.
\]

\begin{thm}
The variation derivative does not depend on the reference frame $\varphi_\alpha$.
\end{thm}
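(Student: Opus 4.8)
The plan is to reduce the theorem to Proposition~\ref{glavna} by recognizing the change of reference frame $\phi_{\alpha\beta}=\varphi_\alpha\circ\varphi_\beta^{-1}$ as an instance of the moving-frame construction of Section~\ref{s2}. First I would unpack the transition function. Writing $c=c_\beta-c_\alpha$, we have $t_\alpha=t_\beta+c$ and $q_\alpha=g_{\alpha\beta}(q_\beta,t_\beta)$. The decisive structural feature is that the two time coordinates differ only by the \emph{constant} $c$, so $d/dt_\alpha=d/dt_\beta$, and a time line $s(t)$ produces curves $\gamma_\alpha$, $\gamma_\beta$ whose velocities agree whether differentiated with respect to the local times $t_\alpha,t_\beta$ or the global time $t$. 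This is precisely what allows the time-dependent diffeomorphism $g_t:=g_{\alpha\beta}(\cdot,t)$ of $Q_\beta$ onto $Q_\alpha$ to fit the framework of Section~\ref{s2}, in which both frames share a single time parameter.

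Next I would check that $L_\beta$ is exactly the Lagrangian associated to $L_\alpha$ by the rule \eqref{noviLagranzijan} for this $g_t$. Both $L_\alpha$ and $L_\beta$ are the single intrinsic Lagrangian $\mathbf L$ read off on the jet bundle $\mathcal J$ in the respective frames, so $L_\beta(q_\beta,\dot q_\beta,t_\beta)=L_\alpha(q_\alpha,\dot q_\alpha,t_\alpha)$ whenever the arguments correspond to the same jet $\underline\xi\in\mathcal J_{\mathbf q}$. Applying $d\phi_{\alpha\beta}$ to $(\dot q_\beta,1)$ (the second entry being $1$ because $d\tau(\underline\xi)=1$) yields $\dot q_\alpha=(\partial g_{\alpha\beta}/\partial q_\beta)\dot q_\beta+\partial g_{\alpha\beta}/\partial t_\beta$, which is the velocity addition \eqref{preslikavanje2} with $Q^i=g_{\alpha\beta}^i$ and $\omega_t$ the partial time derivative of $g_{\alpha\beta}$. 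Hence $L_\beta$ coincides with the associated Lagrangian $l$ built from $L_\alpha$ and $g_t$.

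It then remains to identify the virtual displacements. For $\underline\eta\in\mathcal V_{\mathbf q}=\ker d\tau$, the differential $d\phi_{\alpha\beta}$ sends $(\eta_\beta,0)$ to $((\partial g_{\alpha\beta}/\partial q_\beta)\eta_\beta,\,0)$, so $\eta_\alpha=dg_t(\eta_\beta)$, which is exactly the relation $\eta=dg_t(\xi)$ of Proposition~\ref{glavna}. With the dictionary $Q_\alpha\leftrightarrow Q$, $Q_\beta\leftrightarrow M$, $L_\alpha\leftrightarrow L$, $L_\beta\leftrightarrow l$, $\gamma_\alpha\leftrightarrow\gamma$, $\gamma_\beta\leftrightarrow\Gamma$, $\eta_\alpha\leftrightarrow\eta$, $\eta_\beta\leftrightarrow\xi$, Proposition~\ref{glavna} yields $\delta L_\alpha(\eta_\alpha)\vert_{\gamma_\alpha}=\delta L_\beta(\eta_\beta)\vert_{\gamma_\beta}$, i.e.\ $\delta\mathbf L(\underline\eta)\vert_s$ is the same computed in $\varphi_\alpha$ or in $\varphi_\beta$; since $\alpha,\beta$ are arbitrary, the variational derivative is frame-independent.

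I expect the only genuinely delicate point to be the bookkeeping of the constant offset $c_\beta-c_\alpha$ and the confirmation that it does not obstruct the application of Proposition~\ref{glavna}, which was stated for a single shared time $t$. Because the offset is constant, all time derivatives coincide and the shift $t_\alpha\mapsto t_\beta$ is harmless; once this is observed, verifying that the frame change reproduces \eqref{noviLagranzijan} and \eqref{preslikavanje2} is the routine computation of $d\phi_{\alpha\beta}$ indicated above, after which the proposition closes the argument.
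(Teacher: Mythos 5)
Your proposal is correct and follows essentially the same route as the paper: reduce to Proposition~\ref{glavna} via the dictionary $Q_\alpha\leftrightarrow Q$, $Q_\beta\leftrightarrow M$, $L_\alpha\leftrightarrow L$, $L_\beta\leftrightarrow l$, $\eta_\alpha\leftrightarrow\eta$, $\eta_\beta\leftrightarrow\xi$ with $g_t=g_{\alpha\beta}(\cdot,t)$, after disposing of the constant time offset (the paper simply takes $c_\alpha=c_\beta=0$, $t_\alpha=t_\beta=t$ without loss of generality). Your additional verifications --- that $L_\beta$ is the Lagrangian associated to $L_\alpha$ by \eqref{noviLagranzijan} because both are pullbacks of the same $\mathbf L$ on $\mathcal J$, and that $d\phi_{\alpha\beta}$ maps vertical vectors by $\eta_\alpha=dg_t(\eta_\beta)$ --- merely make explicit what the paper's dictionary asserts implicitly.
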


\begin{proof}
Let $\varphi_\beta$ be another reference frame.
Without loss of generality, we can assume $c_\alpha=c_\beta=0$ and $t_\alpha=t_\beta=t$ (we have translation in time between the reference frames $\varphi_\alpha$
and $\varphi_\beta$).
Then we have a situation as in section \ref{s2}, where
\[
Q_\alpha, q_\alpha, \gamma_\alpha, L_\alpha, \eta_\alpha, \quad
Q_\beta, q_\beta, \gamma_\beta, L_\beta, \eta_\beta, \quad \text{and} \quad q_\alpha=g_{\alpha\beta}(q_\beta,t), \eta_\alpha=dg_{\alpha\beta}(\eta_\beta)
\]
corresponds to
\[
Q, q, \gamma, L, \eta, \quad
M, x, \Gamma, l, \xi,\quad \text{and} \quad q=g_t(x), \eta=dg_t(\xi)
\]
respectively. Therefore, the statement follows from Proposition \ref{glavna}.
\end{proof}

We thus have an invariant formulation of classical Lagrangian dynamics on the space-time $\mathcal Q$ in the form of the \emph{d'Alambert principle}:
a time curve $s$ is a motion of the mechanical system defined by the Lagrangian $\mathbf L\colon\mathcal J\to\R$ if the
Lagrangian derivative of $\mathbf L$ is zero,
\[
\delta\mathbf L(\underline\eta)\vert_{s}=0,
\]
for all virtual displacements $\underline\eta$ along $s$.

\begin{thm}
{\sc The Hamiltonian principle of least action} A time line $s(t)$ between the events $\mathbf q_0$ and $\mathbf q_1$
is a motion of the mechanical system defined by the Lagrangian $\mathbf L\colon\mathcal J\to\R$
if and only if it is a critical point of the action integral
\begin{equation*}\label{action***}
S_{\mathbf L}({s})=\int_{\tau(\mathbf q_0)}^{\tau(\mathbf q_1)} \mathbf L(s(t),\dot{s}(t))dt
\end{equation*}
in a class of time lines connecting the events $\mathbf q_0$ and $\mathbf q_1$.
\end{thm}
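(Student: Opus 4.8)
The plan is to reduce the statement to the one–frame Hamiltonian principle already established in section \ref{s2}, exploiting the frame–independence of $\delta\mathbf L$ proved in the preceding theorem. First I would fix a reference frame $\varphi_\alpha$ and, without loss of generality, take $c_\alpha=0$ so that $\tau=t_\alpha$; set $a=\tau(\mathbf q_0)$, $b=\tau(\mathbf q_1)$, and let $\gamma_\alpha(t)$ be the curve in $Q_\alpha$ associated to the time line $s(t)$ under the identification $\mathcal J\cong TQ_\alpha\times\R$. By the very definition of $L_\alpha$ in terms of $\mathbf L$, the two action integrals agree,
\[
S_{\mathbf L}(s)=\int_a^b \mathbf L(s(t),\dot{s}(t))\,dt=\int_a^b L_\alpha(\gamma_\alpha,\dot\gamma_\alpha,t)\,dt=S_{L_\alpha}(\gamma_\alpha),
\]
and the boundary conditions $s(a)=\mathbf q_0$, $s(b)=\mathbf q_1$ translate into fixed endpoints $\gamma_\alpha(a)=q_0^\alpha$, $\gamma_\alpha(b)=q_1^\alpha$ in $Q_\alpha$.

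The central step is to match the two classes of admissible variations. A variation $s_\epsilon$ of $s$ inside the class of time lines connecting $\mathbf q_0$ and $\mathbf q_1$ satisfies $\tau(s_\epsilon(t))=t$ for every $\epsilon$; differentiating this in $\epsilon$ at $\epsilon=0$ gives $d\tau(\underline\eta)=0$, so the infinitesimal variation $\underline\eta$ is automatically vertical, $\underline\eta\in\mathcal V=\ker d\tau$, i.e. a field of virtual displacements, vanishing at the endpoints. In the frame $\varphi_\alpha$ this $\underline\eta$ corresponds to the variation field $\eta_\alpha$ along $\gamma_\alpha$, and $s_\epsilon$ corresponds to a fixed–endpoint variation $\gamma_{\alpha,\epsilon}$ of $\gamma_\alpha$ in $Q_\alpha$; conversely every fixed–endpoint variation of $\gamma_\alpha$ lifts to such a variation of $s$. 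Thus the critical points of $S_{\mathbf L}$ among time lines correspond exactly to the critical points of $S_{L_\alpha}$ in the class $\varOmega(Q_\alpha,q_0^\alpha,q_1^\alpha,a,b)$.

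Next I would invoke the first–variation formula \eqref{izvod} in the frame $\varphi_\alpha$,
\[
\lim_{\epsilon\to 0}\frac{1}{\epsilon}\big(S_{L_\alpha}[\gamma_{\alpha,\epsilon}]-S_{L_\alpha}[\gamma_\alpha]\big)=\int_a^b \delta L_\alpha(\eta_\alpha)\vert_{\gamma_\alpha}\,dt,
\]
so that $\gamma_\alpha$ is a critical point of $S_{L_\alpha}$ if and only if $\int_a^b \delta L_\alpha(\eta_\alpha)\vert_{\gamma_\alpha}\,dt=0$ for every $\eta_\alpha$ vanishing at the endpoints. By the fundamental lemma of the calculus of variations this is equivalent to the pointwise vanishing $\delta L_\alpha(\eta_\alpha)\vert_{\gamma_\alpha}=0$ for all virtual displacements, which by the definition $\delta\mathbf L(\underline\eta)\vert_s:=\delta L_\alpha(\eta_\alpha)\vert_{\gamma_\alpha}$ is precisely the d'Alambert principle $\delta\mathbf L(\underline\eta)\vert_s=0$ on $\mathcal Q$. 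Hence $s$ is a critical point of $S_{\mathbf L}$ if and only if it is a motion. That the resulting condition does not depend on the chosen frame is guaranteed by the frame–independence of $\delta\mathbf L$ already proved.

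The main obstacle, and the only point demanding care, is the class–of–variations correspondence in the second paragraph: one must verify that the vertical character of the virtual displacements $\underline\eta\in\mathcal V=\ker d\tau$ is exactly the condition ensuring that the deformed curves $s_\epsilon$ remain time lines, i.e. sections with $d\tau(\dot{s}_\epsilon)=1$, so that no spurious reparametrization–in–time variations are introduced or lost in passing between $\mathcal Q$ and the frame $Q_\alpha$. Everything else is a frame-wise application of the single-frame Hamiltonian principle of section \ref{s2}.
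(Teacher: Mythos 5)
Your proposal is correct and follows essentially the route the paper intends: the paper states this theorem without an explicit proof, treating it as an immediate consequence of the frame-independence of $\delta\mathbf L$, the definition of $\delta\mathbf L$ via a chosen frame, and the single-frame equivalence of the d'Alambert and Hamiltonian principles (identity \eqref{izvod} plus the fundamental lemma of the calculus of variations). Your expansion --- identifying variations within the class of time lines with vertical, endpoint-vanishing variation fields in a fixed frame $\varphi_\alpha$ and then invoking the section~\ref{s2} argument --- is precisely the intended filling-in of those details.
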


\subsection*{Acknowledgments}
The author is grateful to Vladimir Dragovi\'c and Borislav Gaji\'c for useful discussions.
The note is based on the joint long year experience in teaching classical mechanics and symplectic geometry at the Mathematical institute SANU.
The author was supported by the Project no. 7744592, MEGIC "Integrability
and Extremal Problems in Mechanics, Geometry and Combinatorics" of the Science Fund
of Serbia.

\end{document}